\newtheorem{example}{Example}
\newtheorem{theorem}{Theorem}
\newtheorem{problem}{Problem}
\newtheorem{lemma}{Lemma}
\newtheorem{corollary}{Corollary}
\newtheorem{remark}{Remark}
\newtheorem{definition}{Definition}
\begin{document}


\title{\vspace{0.25in}Characterizing Compositionality of LQR from \\ the Categorical Perspective}

\author{
Baike She, 
Tyler Hanks, James Fairbanks, 
and 
Matthew Hale*
\thanks{*Baike She and Matthew Hale are with the Department of Mechanical and Aerospace Engineering at the University of Florida.
Their work was supported in part by AFOSR under grant
FA9550-19-1-0169; Tyler Hanks and James Fairbanks are with the
Department of Computer and Information Science and Engineering at University of Florida. Baike She, Matthew Hale, and James Fairbanks, were supported by DARPA under award no. HR00112220038. In addition,  Tyler Hanks was supported by the National Science Foundation Graduate
Research Fellowship Program under Grant No. DGE-1842473. Any opinions, findings,
and conclusions or recommendations expressed in this material are those of the author(s)
and do not necessarily reflect the views of the NSF. 
E-mails: \texttt{\{shebaike,t.hanks,fairbanksj,matthewhale\}@ufl.edu}.
}

}

\maketitle


\begin{abstract}
Composing systems is a fundamental concept in modern control systems, yet it remains challenging to formally analyze how controllers designed for individual subsystems 
can differ from controllers designed for the composition of those subsystems.
To address this challenge, we propose a novel approach to composing control systems based on \textit{resource sharing machines}, a concept from applied category theory. 
We use resource sharing machines to investigate the differences between 
(i) the linear-quadratic regulator (LQR) designed directly for a composite system and 
(ii) the LQR that is attained through the composition of LQRs designed for each subsystem. 
We first establish novel formalisms to
compose LQR control 
designs using resource sharing machines. Then we develop new sufficient conditions to guarantee that 
the LQR designed for a composite system is equal to 
the LQR attained through composition of LQRs
for its subsystems. In addition, we reduce the developed condition to that of checking the controllability
and observability of a certain linear, time-invariant system, which provides a simple, computationally
efficient procedure for evaluating the equivalence of controllers for composed systems.

\end{abstract}

\section{Introduction}
\label{intro}
Modern technology has made it simple to design and construct large-scale dynamical systems through the composition of many subsystems. For example, smart communities may combine control systems such as networks of smart grids, water distribution systems, and transportation systems to provide rich, complex capabilities. 
It is therefore crucial to understand how to perform control design for subsystems and compose these controllers for the composite system\footnote{Throughout this paper, we use the term \textit{composite system} to refer to a system that is formed through the composition of many subsystems.}. 

In the control community, it has been observed that many systems can be analyzed by analyzing their subsystems \cite{li2011smart}, where some properties of the subsystems are inherited by the 
composite system. 
For example, the stability of composite feedback systems can be analyzed based on their subsystems using the small-gain theorem and composed Lyapunov functions \cite{haddad2011nonlinear}, and parallel interconnections of passive systems will induce a composite passive system \cite{ebenbauer2009dissipation}. Additionally, \cite{chapman2014controllability,tran2018generalized, she2020characterizing} have explored how controllability of networks can be explored through their sub-networks. However, the lack of formal analytical language in composing control systems makes it challenging to further advance and unify this work.

Recent advances in hybrid systems have used \textit{applied category theory} \cite{fong2019invitation}  as a formal language for describing the composition of 
continuous- and discrete-time systems~\cite{culbertson2019formal, bakirtzis2021categorical,kvalheim2021conley, lerman2016category}. 
Applied category theory is also being used in mathematical modeling \cite{libkind2022algebraic}, 
scientific computing \cite{halter2020compositional}, data science \cite{patterson2021categorical}, and more. 
Roughly speaking, 
applied category theory focuses on the abstract connections between objects, rather than studying the objects themselves  \cite{awodey2010category,vagner2014algebras}.
Although category theory has found success in
various areas of engineering, there has been relatively limited work on using it to study
compositions of control systems; a representative sample of existing works includes~\cite{tabuada2005quotients, lerman2018networks, lerman2020networks}. 


We focus on exploring a specific way of composing dynamical systems known as \textit{resource sharing machines} (RSMs) \cite{libkind2021operadic, libkind2020algebra}, which are defined in the language of applied category theory. RSMs present a way of composing dynamical systems in which, in a precise way, the states of the composite system can preserve the states of the subsystems, and vice versa. 
Thus, the use of
RSMs can enable new analyses for a composite system through combining analyses originally designed for its subsystems~\cite{libkind2021operadic, libkind2020algebra}. 

In this paper, we study the relationship between (i) the composition of the control designs of 
a collection of subsystems and (ii) control designs made directly for the composition of those subsystems. In both cases, the composition of subsystems is formalized by RSMs. In particular, we will focus on the linear-quadratic regulator (LQR) because it is a cornerstone of control theory~\cite{anderson2007optimal} and the focus of ongoing research~\cite{dorfler2022role, zhang2023revisiting}. 
Additionally, LQR has a closed-form solution, which lets us
study the \textit{compositionality} of the LQR, which we define as the 
property that the composition of LQR controllers designed for subsystems
is equivalent to the LQR controller designed for the composition of those subsystems\footnote{A formal mathematical definition of ``compositionality'' in this context is given in category-theoretic terms in  Section~\ref{sec:background}.}.

To summarize, our contributions are: 
 \begin{itemize}
 \item
We present a novel approach to composing control systems using resource sharing machines (RSMs). Specifically, we compose LQR control designs using RSMs
   \item We show that LQR is \textit{not} always compositional.
    \item We introduce a new, generalized Riccati equation, and we use it to derive new sufficient conditions under which the compositionality of LQR is guaranteed. 
    \item We reduce the problem of validating compositionality of LQR designs to that of checking the controllability and observability
    of a certain linear, time-invariant system. 
\end{itemize}

The rest of the paper is structured as follows. Section~II provides background and problem statements. Section III provides results on composing linear control systems through resource sharing machines (RSMs). Section~IV examines the compositionality of LQR via RSMs. Section~V concludes.

\paragraph*{Notation}
We use~$\mathbb{R}$ and~$\mathbb{N}$ to denote the real and natural numbers, respectively.
All vectors are columns. 
We use $\delta_i$ to denote a basis vector with $i^{th}$ entry $1$ and all others~$0$.  
We define $\underline{n} := \{1,2,3, \dots, n\}$. 
For a matrix $A$, we use $A_{:,j}$ and 
$A_{i,:}$ to denote its $j^{th}$ column and $i^{th}$ row, respectively. We use $x(t)$ and $x$ interchangeably to denote a state vector at time~$t$. We use 
the terms ``variable'' and ``state'' interchangeably. We use $\boldsymbol{0}$ to denote a vector or matrix of zeroes, and its dimension will be clear from context.

\section{Preliminaries and Problem Formulation}\label{section2}
In this section, we first provide background from resource sharing machines and the composition of linear systems. 
Then we state the problems that are the focus of this paper. 

\subsection{Applied Category Theory and Resource Sharing Machines} \label{sec:background}


In this paper, we focus on the compositional properties of LQR control design for coupled dynamical systems. In this section, we present a graphical calculus for specifying such coupled systems based on ``resource sharing'' as defined in \cite{libkind2021operadic}. We extend it to closed-loop control designs and define what it means for an LQR controller to be \textit{compositional}.

\begin{definition}
    Let~$X=\mathbb{R}^n$ and $U=\mathbb{R}^m$ for some~$m, n \in \mathbb{N}$. The symbol~$TX\cong X\times X$ denotes the tangent bundle of $X$. An \textbf{open-loop control system} is a $U$-parameterized vector field on $X$, i.e., a smooth function $f\colon U\times X\rightarrow TX$, where $f(u,x) = (x,d)$ for all $x\in X$ and $u\in U$ \cite{spivak2015dynam}. Here $d\in X$ is the tangent vector at the point $x$. The set
    $X$ is called the \textbf{state space} of the system and $U$ is called the \textbf{control surface}. Such a vector field 
    defines a system of ordinary differential equations
        $\dot{x} = f(x,u)$,
    which we call the \textbf{dynamics} of the system. In the case of a linear open-loop control system, the ODE system has the form
        ${\dot{x} = Ax + Bu}$
    for some $A\in \mathbb{R}^{n\times n}$ and $B\in \mathbb{R}^{n\times m}$.
\end{definition}

We use the term \textit{subsystem} to refer to a system that is not a composition of other systems. 
To model coupled dynamical systems, each open-loop subsystem must specify its \textit{boundary}, a subspace of its state space designating which variables can couple to other systems. The intuition is that given some open-loop control subsystems with boundary variables, we can ``glue" the subsystems together along shared boundary variables to form a composite system. We can represent an open-loop control subsystem with a boundary graphically. For example, consider a system~$S_1$ with state space $\mathbb{R}^3$, control surface $\mathbb{R}$, and dynamics $(\dot{x},\dot{y},\dot{z}) = f(u,(x,y,z))$, where $x$ and $z$ are boundary variables. This is represented by the following
\textit{undirected wiring diagram} (UWD): 

\begin{center}
\includegraphics[scale=0.18]{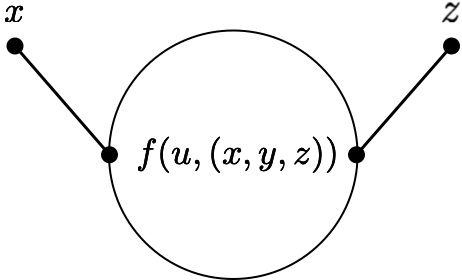}.
\end{center}
The circle represents the open-loop subsystem defined by $f$, and the wires represent its boundary variables,~$x$ and~$z$. 
\subsection{Composing Open-Loop Control Subsystems}
To compose open-loop control subsystems diagrammatically, 
we simply connect the wires for the boundary variables that are coupled. For example, suppose we want to compose $S_1$ with another subsystem $S_2$ that has state space $\mathbb{R}^2$, control surface $\mathbb{R}^2$, and dynamics $(\dot{w},\dot{z}) = g(u_2,(w,z))$ with both of its state variables in its boundary. Suppose that we want to couple the third state of $S_1$  (i.e., $z$ in $f$) with the second state of $S_2$  (i.e., $z$ in $g$), we have the resultant UWD shown below: 

\begin{center}
\includegraphics[scale=0.18]{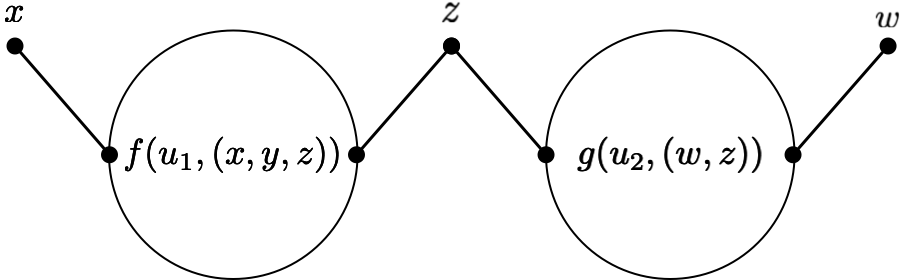}.
\end{center}
Note that we consider compositions of two subsystems for simplicity; 
our results generalize to any number of subsystems in an obvious way.

A benefit of this graphical syntax is that it uniquely defines the dynamics of the composite system based on the dynamics of the subsystems. For linear systems, the composition is achieved by the construction of a \textit{composition matrix} $K$. To define it, consider two open-loop linear control subsystems $S_1$ and $S_2$, where
    $S_1$ has state space $\mathbb{R}^{n_1}$, control surface $\mathbb{R}^{m_1}$, and dynamics 
    \begin{equation} \label{Eq:Sys_1}
        \dot{x}^1 (t) = {A}^1 x^1(t) + {B}^1u^1(t),
    \end{equation}
    and $S_2$ has state space $\mathbb{R}^{n_2}$, control surface $\mathbb{R}^{m_2}$, and dynamics 
    \begin{equation}\label{Eq:Sys_2}
        \dot{x}^2 (t) = {A}^2 x^2(t) + {B}^2u^2(t).
    \end{equation}
Note that $x^i\in\mathbb{R}^{n_i}$, $A^i\in\mathbb{R}^{n_i\times n_i}$, $B^i\in\mathbb{R}^{n_i\times m_i}$, $u^i\in\mathbb{R}^{m_i}$, $i\in\underline{2}$.
Suppose further that $S_1$ and $S_2$ share $k \leq \min\{n_1,n_2\}$ states defined by the composition pattern of some UWD. 
This sharing allows us to define a lower-dimensional state vector for the composite system, denoted~$\bar{x}$. That is,
the composition pattern between~$S_1$ and~$S_2$
defines an identification between the new state variable $\bar{x}\in \mathbb{R}^{n_1+n_2-k}$ and the state variables of~$S_1$ and~$S_2$ themselves. 
The composition pattern 
can be used to construct the composition matrix~${K \in \mathbb{R}^{(n_1+n_2)\times(n_1+n_2-k)}}$ as follows: 
\begin{itemize}
    \item if $\bar{x}_i$ is a non-shared variable and $\bar{x}_i=x^1_j$ for some $j\in \emph{n_1}$ then we have $K_{j,:} =\delta_i^{\top}$,
    \item if $\bar{x}_i$ is a non-shared variable and $\bar{x}_i=x^2_k$ for some $k\in \emph{n_2}$ then we have $K_{k+n_1,:} =\delta_i^{\top}$,
    \item if $\bar{x}_i$ is a shared variable with $\bar{x}_i=x^1_j=x^2_k$ for some $j\in\emph{n_1}$ and~$k\in\emph{n_2}$, then we have $K_{j,:} = K_{k+n_1,:} =\delta_i^{\top}$. 
\end{itemize}

We can now define the composition of open-loop linear subsystems based on the UWD syntax.

\begin{definition}[Composition of linear open-loop control subsystems]
\label{Def:op-contr}
    Let $S_1$ and $S_2$ be the linear open-loop control subsystems in~\eqref{Eq:Sys_1} and~\eqref{Eq:Sys_2}, respectively. Suppose we have computed the composition matrix $K$ based on a UWD defining $S_1$ and $S_2$'s composition pattern. The composite system $S_1\square S_2$ 
    has state space $\mathbb{R}^{n_1+n_2-k}$, control surface $\mathbb{R}^{m_1+m_2}$, and dynamics
\begin{align}\label{Eq:Contr_comp}
 \dot{\bar{x}}(t)  = 
 \underbrace{
 K^{\top}
\underbrace{
\begin{bmatrix}
A^1 & \boldsymbol{0} \\
\boldsymbol{0}& A^2 \\
\end{bmatrix}}_{\bar{A}}K}_{\mathcal{A}}
\bar{x}(t) + 
\underbrace{K^{\top} \underbrace{
\begin{bmatrix}
B^1 & \boldsymbol{0} \\
\boldsymbol{0} & B^2 \\
\end{bmatrix}}_{\bar{B}}}_\mathcal{B}
\underbrace{
\begin{bmatrix}
u^1(t) \\
u^2(t) \\
\end{bmatrix}}_{u(t)}.
\end{align}
\end{definition}

\begin{remark}
We point out two aspects of~\eqref{Eq:Contr_comp}. First,
the dynamics are not separable because the system matrix of~\eqref{Eq:Contr_comp} is $K^\top\bar{A}K$, 
which is non-diagonal. 
Second, 
control inputs are not shared. 
    Instead, the control input of a shared state variable is obtained by summing up the control inputs from the subsystems
    that share that state. 
\end{remark}

\begin{remark}
    The composition matrix $K$ encodes the resource sharing machine between two linear subsystems. In \eqref{Eq:Contr_comp}, the term $K\bar{x}(t)$ replicates the shared variables while preserving the non-shared variables, and it maps the replicated shared variables to the dynamics of
    each of the two subsystems. 
    Using this formulation, the dynamics of a shared variable in the composite system
    can be obtained by summing the dynamics of the corresponding duplicated variables over each subsystem in which they appear. 
    This is done in the term~$K^{\top}\bar{A}K\bar{x}(t)$. 
    On the other hand, the dynamics of a non-shared variable 
    in the composite system are identical to those of the corresponding variable in the subsystem.
\end{remark}

To reason about LQR on composite systems, we need a formal setting for reasoning about feedback systems. For this, we use the notion of $(I,O)$-Systems developed in \cite{spivak2015dynam}.

\begin{definition}
    Let $I$ and $O$ be Euclidean spaces. An \textbf{$(I,O)$-System} is a 3-tuple $(X,f^{upd},f^{rdt})$, where
    \begin{itemize}
        \item $X$ is the system's state space and is a Euclidean space,
        \item $f^{upd}$ is an $I$-parameterized vector field $I\times X\rightarrow TX$ called the update function defining the system dynamics,
        \item $f^{rdt}$ is a function $X\rightarrow O$ called the readout function.
    \end{itemize}
\end{definition}
Observe that given an open-loop control system $S$ with state space $X$, control surface $U$, and dynamics $f$, we can construct a $(U,X)$-System as the~$3$-tuple $(X,f,\text{id}_X)$, where the
readout map~$\text{id}_X$ is the identity map on~$X$, which corresponds to a system's output simply equaling its state vector.
We call this construction a ``boxed open-loop system,'' 
and we represent it graphically by wrapping an open-loop control system in a box with input and output wires:
\begin{center}
    \includegraphics[scale=.2]{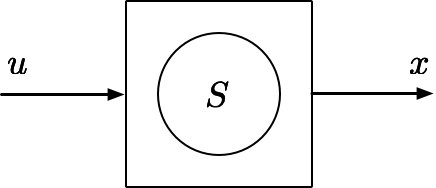}.
\end{center}
\subsection{Compositionality of LQR}
We can now define closed-loop feedback control subsystems for boxed open-loop subsystems.
\begin{definition}
\label{Def: Close}
    Given a $(U,X)$-System $(X,f\colon X\times U\rightarrow TX,\text{id}_X)$ and a full state feedback controller $F\colon X\rightarrow U$, the \textbf{closed-loop feedback control system} defined by this data is a $(\emptyset,X)$-System with state space~$X$, 
    update function given by $x \mapsto f(F(x),x)$, and readout function~$\textnormal{id}_X$.
        Graphically, we represent the system by
    \begin{center}
        \includegraphics[scale=.2]{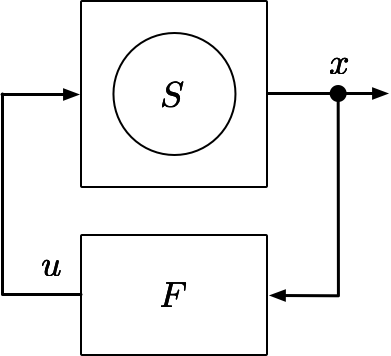}.
    \end{center}
    In the linear case, this gives~$\dot{x} = Ax + BFx$,
    where~$F\in \mathbb{R}^{m\times n}$ is a gain matrix.
    Note that the input space is the empty set to signify the closed-loop nature of the system, i.e.,
    each input is uniquely determined by the system's state.
\end{definition}


In light of the above definitions, we now have two ways of designing LQR controllers for composite dynamical systems: we can either design controllers for each subsystem individually, then compose the resultant controllers, or compose the subsystem dynamics, and design a single controller for the overall composite system. This observation sets up the central definition of the paper.

\begin{definition}[Compositionality of LQR]
\label{Def:compositionality}

Given open-loop subsystems $S_1$ and $S_2$, an LQR controller is \textbf{compositional} if the following diagram commutes:

\begin{tikzcd}
	{S_1,S_2} && {LQR(S_1),LQR(S_2)} \\
	\\
	{S_1\square S_2} && {LQR(S_1\square S_2)}
	\arrow["{LQR,LQR}", from=1-1, to=1-3]
	\arrow["{\text{compose}}"', from=1-1, to=3-1]
	\arrow["LQR"', from=3-1, to=3-3]
	\arrow["{\text{compose}}", from=1-3, to=3-3]
\end{tikzcd}
where~``$LQR$'' denotes the synthesis of an LQR controller to minimize a given cost,
and~``$compose$'' denotes the composition of open-loop systems
on the left of the diagram and composition of closed-loop systems on the right of the diagram.
\end{definition}

Unpacking this definition, we have that in order for an LQR controller to be compositional for given subsystems $S_1$ and $S_2$, the controller of the composite of the subsystems must be the same as composing the controllers designed for $S_1$ and $S_2$ individually. 



\subsection{Problem Statements}
In light of the above, we must formalize the composition of closed-loop systems, then evaluate
the compositionality of their properties. In particular, we will solve the following: 
\begin{problem}
\label{prob:1}
Under the definition of closed-loop control systems in~\ref{Def: Close},
how can we compose closed-loop control inputs using the framework of resource sharing machines?
\end{problem}
\begin{problem}
\label{prob:2}
Under the definition of compositionality in Definition~\ref{Def:compositionality},
does compositionality of stability, which holds for systems without control, also hold when control inputs are present?
That is, must the composition of stable subsystems give a stable composite system? 
\end{problem}
\begin{problem}
\label{prob:3}
Under what conditions will the compositionality of the LQR design given by Definition~\ref{Def:compositionality} be preserved when using the resource sharing machines formulation?
\end{problem}
We will solve Problems~\ref{prob:1}-\ref{prob:3} in the next two 
sections.

\subsection{Illustrative Example}
We close this section with the following example to illustrate the composition of two linear systems with no control inputs
through the composition matrix~$K$; we omit inputs for simplicity, but composition of systems with inputs
can be done using the same composition matrix~$K$ constructed in the example (as shown in Definition~\ref{Def:op-contr}). 
\begin{example}
Consider the following two subsystems
\begin{align} 
   \underbrace{
   \begin{bmatrix}
       \dot {x}^i_{1}(t) \\ \dot{x}^i_{2}(t)
   \end{bmatrix}}_{\dot{x}^i(t)}
     &= 
    \underbrace{
   \begin{bmatrix}
       a^i_{11} & a^i_{12}\\a^i_{21} & a^i_{22}
   \end{bmatrix}}_{A^i}
   \underbrace{
    \begin{bmatrix}
       {x}^i_{1}(t) \\ {x}^i_{2}(t)
   \end{bmatrix}}_{x^i(t)},
\end{align}
where $i\in \underline{2}$. 
In this example, the two subsystems share $x^1_2$ and $x^2_1$, i.e., $x^1_2=x^2_1=\Bar{x}_2$ in the composed system. Based on \eqref{Eq:Contr_comp},
we have the composite system
\begin{align} 
   \underbrace{
   \begin{bmatrix}
       \dot {\bar{x}}_{1}(t) \\ 
       \dot {\bar{x}}_{2}(t)\\
       \dot {\bar{x}}_{3}(t)
   \end{bmatrix}}_{\dot{\bar{x}}(t)}
     &= 
     \underbrace{ 
    \begin{bmatrix}
       1 & 0 & 0 & 0\\ 
       0 & 1 & 1 & 0\\
       0 & 0 & 0 & 1\\
   \end{bmatrix}}_{K^{\top}}
    \underbrace{
   \begin{bmatrix}
      A^1 & 0\\ 0 & A^2
   \end{bmatrix}}_{\bar{A}}
   \underbrace{ 
    \begin{bmatrix}
       1 & 0 & 0\\ 
       0 & 1 & 0\\
       0 & 1 & 0\\
       0 & 0 & 1\\
   \end{bmatrix}}_{K}
   \underbrace{ 
    \begin{bmatrix}
       {\bar{x}}_{1}(t) \\ 
       {\bar{x}}_{2}(t)\\
       {\bar{x}}_{3}(t)
   \end{bmatrix}}_{\Bar{x}(t)}.
\end{align}
Note that in the composite system, the composition matrix $K$ duplicates the shared state $\Bar{x}_2$ twice, and maps the copies to the 
dynamics of the two subsystems separately. Meanwhile, $K$ maps the non-shared state $\Bar{x}_1(t)$ to $x^1_1$, 
and it maps the non-shared state $\Bar{x}_3(t)$ to $x^2_2$.
After updating the states through $A_1$ and $A_2$, we compose the changes of all states through $K^{\top}$. The matrix $K^\top$ maps the dynamics of the non-shared variable $\dot{x}^1_1$ 
to $\dot{\bar{x}}_1$. It also maps the dynamics of the non-shared variable
$\dot{x}^2_2$ to $\dot{\bar{x}}_3$. For dynamics of the shared variable $\dot{\Bar{x}}_2(t)$,  the matrix $K^{\top}$ sums the dynamics of the variables from the subsystems, i.e., $\dot{\Bar{x}}_2(t) = \dot{\Bar{x}}^1_2(t)+\dot{\Bar{x}}^2_1(t)$.
\end{example}

Non-shared states have the same dynamics as the corresponding states in the subsystems. Shared states, on the other hand, are duplicated and summed, leading to identical values
for them within the subsystems that share them. Further, the dynamics of a shared state are the summation of the dynamics of its corresponding states in the subsystems.

\section{Compositionality of Linear Control Systems}
\label{section3}
In this section, we propose a way of composing control inputs based on the framework of the resource sharing machines.
In order to solve Problems~\ref{prob:1} and ~\ref{prob:2}, we will propose a mechanism to compose the control inputs of closed-loop subsystems based on Definition~\ref{Def:op-contr}. Then, we will investigate the conditions under which the resulting composite control system preserves the compositionality of the feedback laws of the subsystems.
We note that these are generic feedback laws and we do not consider LQR explicitly until the next section. 
We consider the case where the control signals can be designed through a full state feedback mechanism, where
\begin{align}
\label{Eq: closed-sub}
    u^1(t) = F^1 x^1(t) \quad\textnormal{ and }\quad  u^2(t) = F^2 x^2(t), 
\end{align}
where the matrices $F^i \in \mathbb{R}^{m_i \times n_i}$, $i\in \underline{2}$, are gain matrices. 
Building on Definition~\ref{Def:op-contr},
we introduce the following definition for the composite closed-loop control system. 
\begin{definition}[Composite Closed-loop Feedback Control Systems]
\label{Def:closed-contr}
For two closed-loop feedback control subsystems $\dot{x}^i (t) = {A}^i x^i(t) + {B}^iF^ix^i(t)$, $i \in \underline{2}$, the composite closed-loop system is given by 
\begin{align}
\label{eq:comp_closed}
 \dot{\bar{x}}(t)=\mathcal{A}\bar{x}(t)+\mathcal{B}
\underbrace{
\underbrace{
  \begin{bmatrix}
     F_1 & \boldsymbol{0} \\ \boldsymbol{0} & F_2
 \end{bmatrix}}_{\bar{F}} K}_{\mathcal{F}} \bar{x}(t).
 \end{align}
\end{definition}

Our choice to compose closed-loop systems in this way is based on 
wanting to preserve the dynamics of the individual subsystems.
The following result shows that this choice of composition
does indeed do so.

\begin{lemma}
\label{lem:comp_closed}
The dynamics of the composed closed-loop control systems 
$\dot{\bar{x}}(t) = \mathcal{A} \bar{x}(t) +  \mathcal{B}  \mathcal{F}\bar{x}(t)$ in \eqref{eq:comp_closed}, are equivalent to the composition of the subsystems
$\dot{x}^i (t) = {A}^i x^i(t) + {B}^iF^ix^i(t)$, $i \in \underline{2}$.
\end{lemma}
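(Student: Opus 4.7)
The plan is to verify the claim by direct substitution and block-matrix manipulation, leveraging the structure of the composition matrix $K$ inherited from Definition~\ref{Def:op-contr}. Conceptually, $K\bar{x}(t)$ lifts the composite state $\bar{x}$ into the stacked subsystem state $(x^1, x^2)^{\top}$ by duplicating shared coordinates, while $K^{\top}$ aggregates by summing the time derivatives of duplicated coordinates. If I can show that this lifting/aggregation commutes with the addition of the feedback term, the result follows.

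First, I would recognize that the stacked feedback matrix $\bar{F} = \mathrm{diag}(F^1, F^2)$ applied to $K\bar{x}(t)$ reproduces precisely the subsystem feedback laws of~\eqref{Eq: closed-sub}, namely
\begin{equation*}
\bar{F} K \bar{x}(t) = \begin{bmatrix} F^1 x^1(t) \\ F^2 x^2(t) \end{bmatrix} = \begin{bmatrix} u^1(t) \\ u^2(t) \end{bmatrix}.
\end{equation*}
Thus $\mathcal{F}\bar{x}(t)$ is just the stacked control input $u(t)$ from Definition~\ref{Def:op-contr}, ensuring consistency between the open-loop composite dynamics and the new closed-loop composite dynamics.

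Next, substituting into~\eqref{eq:comp_closed} and using the definitions of $\mathcal{A}$ and $\mathcal{B}$, I would factor:
\begin{equation*}
\dot{\bar{x}}(t) = K^{\top} \bar{A} K \bar{x}(t) + K^{\top} \bar{B} \bar{F} K \bar{x}(t) = K^{\top} \bigl(\bar{A} + \bar{B}\bar{F}\bigr) K \bar{x}(t).
\end{equation*}
Because $\bar{A}$, $\bar{B}$, and $\bar{F}$ are all block-diagonal with compatible block sizes, $\bar{A} + \bar{B}\bar{F}$ is itself block-diagonal with blocks $A^i + B^i F^i$. This is exactly the system matrix obtained by treating each closed-loop subsystem $\dot{x}^i = (A^i + B^i F^i) x^i$ as an autonomous linear system and then applying the RSM composition of Definition~\ref{Def:op-contr} (with inputs omitted). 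Hence~\eqref{eq:comp_closed} coincides with the RSM composition of the closed-loop subsystems, which is the desired equivalence.

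The proof is essentially a bookkeeping exercise rather than something with a genuine obstacle; the only point requiring care is to articulate what ``composition of closed-loop subsystems'' means in the RSM framework, since Definition~\ref{Def:op-contr} was stated for open-loop systems. I would address this by explicitly interpreting the closed-loop subsystem as an autonomous system with effective dynamics matrix $A^i + B^i F^i$ and no control surface, so that Definition~\ref{Def:op-contr} still applies with the same composition matrix $K$. This aligns with the intent of Definition~\ref{Def:closed-contr} and makes the equivalence an immediate consequence of the distributivity used above.
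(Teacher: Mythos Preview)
Your proof is correct and follows essentially the same approach as the paper's own proof: both expand $\mathcal{A}\bar{x}+\mathcal{B}\mathcal{F}\bar{x}$ as $K^{\top}(\bar{A}+\bar{B}\bar{F})K\bar{x}$, use $K\bar{x}=(x^{1},x^{2})^{\top}$ together with the block-diagonal structure to recover $K^{\top}[\dot{x}^{1};\dot{x}^{2}]$, and identify this with the RSM composition of the autonomous closed-loop subsystems. Your version is slightly more explicit in spelling out that ``composition of closed-loop subsystems'' means applying Definition~\ref{Def:op-contr} to the input-free systems with effective matrices $A^{i}+B^{i}F^{i}$, but otherwise the arguments coincide.
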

\begin{proof}
    Based on Definition~\ref{Def:op-contr} and \eqref{eq:comp_closed}, we have the composed dynamics given by 
\begin{align}
 \dot{\bar{x}}(t)  &= K^{\top}
\begin{bmatrix}
A^1 & \boldsymbol{0} \\
\boldsymbol{0} & A^2 \\
\end{bmatrix} K \bar{x}(t) + K^{\top}
\begin{bmatrix}
B^1 & \boldsymbol{0} \\
\boldsymbol{0} & B^2 \\
\end{bmatrix}
\bar{F}K\Bar{x}(t),\nonumber
\end{align}
which leads to
\begin{align}
 \dot{\bar{x}}(t)  &= 
 K^{\top} \left(
\begin{bmatrix}
A^1 & \boldsymbol{0} \\
\boldsymbol{0} & A^2 \\
\end{bmatrix} \begin{bmatrix}
x^1(t) \\
x^2(t)  \\
\end{bmatrix} +
\begin{bmatrix}
B^1F^1 & \boldsymbol{0} \\
\boldsymbol{0} & B^2F^2 \\
\end{bmatrix}
\begin{bmatrix}
x^1(t) \\
x^2(t) \\
\end{bmatrix}\right)\nonumber\\ 
&= K^{\top}\begin{bmatrix}
\dot{x}^1(t) \\
\dot{x}^2(t) \\
\end{bmatrix},
\end{align}
where $\dot{x}^1(t)$ and $\dot{x}^2(t)$ correspond to the closed-loop subsystem dynamics in Definition~\ref{Def:closed-contr}, respectively. Note that we used the fact that $K\bar{x}(t) = [x^1(t) \ x^2(t)]^{\top}$ and the definition of $\bar{F}$ in~\eqref{eq:comp_closed}. Hence, we have shown that by composing the dynamics of the two closed-loop control subsystems $\dot{x}^i (t) = {A}^i x^i(t) + {B}^iF^ix^i(t)$, $i \in \underline{2}$, 
the composed dynamics is equivalent to the composed system  in~\eqref{eq:comp_closed}. 
\end{proof}


Lemma~\ref{lem:comp_closed} shows that we can leverage the composite closed-loop control system given in~\eqref{eq:comp_closed} to study the dynamics of its subsystems, and vice versa. Hence, Lemma~\ref{lem:comp_closed} solves Problem~1.

Using this formulation, we next solve Problem~\ref{prob:2}. 
\begin{theorem}
\label{Thm:Stability}
Consider two asymptotically stable closed-loop feedback control systems $\dot{x}^i (t) = {A}^i x^i(t) + {B}^iF^ix^i(t)$ for $i \in \{1,2\}$.
If the matrices ${A}^i + {B}^iF^i$ for $i \in \{1,2\}$ are symmetric, then
the composite closed-loop system 
 $\dot{\bar{x}}(t)=\mathcal{A}\bar{x}(t)+\mathcal{B}\mathcal{F}\Bar{x}(t)$ 
 with~$\mathcal{A} = K^{\top}\bar{A}K$ and~$\mathcal{BF} = K^{\top}\bar{B}\bar{F}K$
 is asymptotically stable, where~$\bar{F}$ and~$\mathcal{F}$ are from Definition~\ref{Def:closed-contr}
 and~$\bar{A}$ and~$\bar{B}$ are from Definition~\ref{Def:op-contr}. 
 \end{theorem}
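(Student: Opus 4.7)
My plan is to reduce asymptotic stability of the composite system to showing that its dynamics matrix is symmetric negative definite, from which a quadratic Lyapunov function does the rest. By Lemma~\ref{lem:comp_closed} the composite closed-loop dynamics matrix is
\begin{equation}
\mathcal{A}+\mathcal{B}\mathcal{F} \;=\; K^{\top}\bigl(\bar{A}+\bar{B}\bar{F}\bigr)K,
\end{equation}
and the middle factor is block-diagonal:
\begin{equation}
\bar{M} \;:=\; \bar{A}+\bar{B}\bar{F} \;=\; \mathrm{diag}\bigl(A^{1}+B^{1}F^{1},\, A^{2}+B^{2}F^{2}\bigr).
\end{equation}
By the symmetry hypothesis each block $A^{i}+B^{i}F^{i}$ is symmetric, and because each closed-loop subsystem is asymptotically stable, every eigenvalue of each block is strictly negative. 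Since a symmetric matrix with only negative eigenvalues is negative definite, $\bar{M}$ is symmetric negative definite.

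Next I would verify that $K$ has full column rank, so that the congruence $K^{\top}\bar{M}K$ preserves negative definiteness. This follows from the construction of $K$ in Section~II: the $i$-th column of $K$ encodes the placement of the composite coordinate $\bar{x}_{i}$ inside the stacked vector $[x^{1\top}\; x^{2\top}]^{\top}$, with $1$'s in either the single row $j\in\underline{n_{1}}$, the single row $k+n_{1}$ with $k\in\underline{n_{2}}$, or the pair $\{j,\,k+n_{1}\}$ in the shared case. Distinct columns correspond to distinct composite coordinates and therefore to disjoint sets of row indices, so the columns of $K$ are linearly independent. Hence $K\bar{x}=\mathbf{0}$ implies $\bar{x}=\mathbf{0}$.

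Combining these two facts, for any $\bar{x}\neq\mathbf{0}$ we have $K\bar{x}\neq\mathbf{0}$ and
\begin{equation}
\bar{x}^{\top}\bigl(K^{\top}\bar{M}K\bigr)\bar{x} \;=\; (K\bar{x})^{\top}\bar{M}(K\bar{x}) \;<\; 0,
\end{equation}
so $\mathcal{A}+\mathcal{B}\mathcal{F}$ is symmetric negative definite. Taking the Lyapunov function $V(\bar{x})=\tfrac{1}{2}\bar{x}^{\top}\bar{x}$ yields $\dot{V}(\bar{x})=\bar{x}^{\top}(K^{\top}\bar{M}K)\bar{x}<0$ for all $\bar{x}\neq\mathbf{0}$, which establishes global asymptotic stability of the composite closed-loop system.

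The only step that is not a routine algebraic manipulation is the full-column-rank argument for $K$; I expect this to be the main obstacle in the sense that it is the unique place where the precise combinatorial structure of the resource sharing pattern must be used. Once that is in hand, the result is an immediate consequence of standard facts about congruence and symmetric negative definite matrices.
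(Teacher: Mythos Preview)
Your proof is correct and follows essentially the same route as the paper: show that the block-diagonal $\bar{A}+\bar{B}\bar{F}$ is symmetric negative definite, use injectivity of $K$ (which the paper asserts ``based on the definition of the composition matrix $K$'' while you argue it more explicitly via disjoint row supports), and conclude that $K^{\top}(\bar{A}+\bar{B}\bar{F})K$ is symmetric negative definite, hence Hurwitz. The only cosmetic difference is that you close with a quadratic Lyapunov function whereas the paper simply invokes ``negative definite $\Rightarrow$ Hurwitz''; the substance is identical.
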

\begin{proof}
    If both subsystems~\eqref{Eq:Sys_1} and~\eqref{Eq:Sys_2} with the feedback control laws given by~\eqref{Eq: closed-sub} are stable, then the system matrices
     $A^i + B^iF^i$, $i\in \{1,2\}$ are Hurwitz matrices, i.e., all eigenvalues of both matrices have strictly negative real parts. 
     Thus, based on~\eqref{Eq:Contr_comp} and~\eqref{eq:comp_closed}, 
   the block diagonal matrix $\Bar{A} + \Bar{B}\bar{F}$ is also a Hurwitz matrix
   because its eigenvalues are those of~$A^1 + B^1F^1$ and~$A^2 + B^2F^2$. 
   Further, if ${A}^i + {B}^iF^i$ for $i \in \{1,2\}$ are both symmetric, 
   then the diagonal matrix
   $\Bar{A} + \Bar{B}\bar{F}$ is also symmetric, and, because
   it is Hurwitz, it is thus negative definite as well.
   Hence, for any nonzero vector $z\in \mathbb{R}^{n_1+n_2}$,
    \begin{align}
    \label{Eq:Neg_Def}
       z^\top (\Bar{A} + \Bar{B}\bar{F})z <0.
    \end{align}
Based on the definition of the composition matrix $K$, for any nonzero vector $\Bar{z}\in \mathbb{R}^{n_1+n_2-k}$, we always have a corresponding nonzero vector $z= K\Bar{z}\in \mathbb{R}^{n_1+n_2}$. Hence, for any nonzero vector $\Bar{z}\in \mathbb{R}^{n_1+n_2-k}$, based on \eqref{Eq:Neg_Def}, we have 
\begin{align*}
       {\Bar{z}}^\top K^\top (\Bar{A} + \Bar{B}\bar{F})K\Bar{z}
        = z^\top (\Bar{A} + \Bar{B}\bar{F})z<0.
\end{align*}
Further, for any nonzero vector $\Bar{z}\in \mathbb{R}^{n_1+n_2-k}$, we have
\begin{align*}
       {\Bar{z}}^\top K^\top (\Bar{A} + \Bar{B}\bar{F})K\Bar{z}
        &= {\Bar{z}}^\top  (K^\top\Bar{A}K + K^\top\Bar{B}\bar{F}K)\Bar{z}\\
       & = \Bar{z}^\top(\mathcal{A} +\mathcal{B}\mathcal{F})\Bar{z} <0.
\end{align*}

Because $\Bar{A} + \Bar{B}\bar{F}$ is symmetric, we see that the matrix $\mathcal{A} +\mathcal{B}\mathcal{F}$ is also symmetric. Hence, $\mathcal{A} +\mathcal{B}\mathcal{F}$ is a negative definite matrix and a Hurwitz matrix. Recall from~\eqref{eq:comp_closed} that $\mathcal{A} +\mathcal{B} \mathcal{F}$ 
defines the dynamics of the composite closed-loop control system. Thus, the composite closed-loop control system is asymptotically stable, as desired. 
\end{proof}

Theorem~\ref{Thm:Stability} shows that the stability of the closed-loop 
control systems  under the composition of resource sharing machines will be preserved if the system
matrices of the closed-loop control systems are symmetric.
However, Theorem~\ref{Thm:Stability} does not determine if feedback control laws designed directly for the composite
system are the same as the compositions of control laws designed for each subsystem. 
We will answer this question in the following theorem. 

\begin{theorem}
\label{thm:comp}
Consider the composite feedback control system  $\dot{\bar{x}}(t) = \mathcal{A} \bar{x}(t) +  \mathcal{B}\bar{u}(t)$, where $\bar{u}(t)$ is designed for the composite system directly so that $\bar{u}(t) = \hat{\mathcal{F}}\Bar{x}(t)$. Consider another composite feedback control system $\dot{\bar{x}}(t) = \mathcal{A} \bar{x}(t) +  \mathcal{B}\bar{F}K\bar{x}(t)$, where the control design of the composite system is obtained through the composition of the two closed-loop control subsystems via Definition~\ref{Def:closed-contr}. 
The two systems are equivalent if $\hat{\mathcal{F}}= \bar{F}K$.
\end{theorem}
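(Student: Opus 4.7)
The statement is essentially a direct substitution result, so my plan is to show equivalence by plugging the hypothesis into the dynamics of each system and matching the right-hand sides term by term. First I would take the directly-designed composite system $\dot{\bar{x}}(t) = \mathcal{A}\bar{x}(t) + \mathcal{B}\bar{u}(t)$ with $\bar{u}(t) = \hat{\mathcal{F}}\bar{x}(t)$, and substitute the hypothesis $\hat{\mathcal{F}} = \bar{F}K$ to obtain $\dot{\bar{x}}(t) = \mathcal{A}\bar{x}(t) + \mathcal{B}\bar{F}K\bar{x}(t)$. This is exactly the closed-form dynamics of the composite system obtained from composing the two closed-loop subsystems through the resource sharing machine construction of Definition~\ref{Def:closed-contr}, where $\mathcal{F} = \bar{F}K$.

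Next, to make the equivalence rigorous, I would argue that both systems have the same state space $\mathbb{R}^{n_1+n_2-k}$ and the same drift matrix $\mathcal{A} = K^{\top}\bar{A}K$ (by Definition~\ref{Def:op-contr}), so the only potential source of disagreement is the feedback term. Under the hypothesis $\hat{\mathcal{F}} = \bar{F}K$, the feedback terms $\mathcal{B}\hat{\mathcal{F}}\bar{x}(t)$ and $\mathcal{B}\bar{F}K\bar{x}(t)$ coincide pointwise in $\bar{x}$, so the two vector fields are identical. Hence, for any initial condition $\bar{x}(0)$, the two systems produce the same trajectory, which is what ``equivalent'' means in this setting.

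Finally, I would briefly note the converse direction as a sanity check: if the two closed-loop dynamics agree for all $\bar{x}$, then $\mathcal{B}\hat{\mathcal{F}} = \mathcal{B}\bar{F}K$. In general this only forces $\hat{\mathcal{F}} = \bar{F}K$ up to the kernel of $\mathcal{B}$, but because the theorem gives a sufficient condition (``if $\hat{\mathcal{F}} = \bar{F}K$''), the converse is not needed. The main obstacle is really just conceptual rather than technical: one has to make sure that $\hat{\mathcal{F}}$ and $\bar{F}K$ live in the same space $\mathbb{R}^{(m_1+m_2)\times(n_1+n_2-k)}$ so that the equality is well-posed, which follows from the dimensions assigned in Definitions~\ref{Def:op-contr} and~\ref{Def:closed-contr}. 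Once the dimensions are checked, the rest of the argument is a one-line substitution.
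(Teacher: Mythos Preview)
Your proposal is correct and follows essentially the same route as the paper: both arguments substitute $\hat{\mathcal{F}} = \bar{F}K$ into the directly-designed closed loop and observe that the resulting dynamics coincide with those of the composed closed-loop system in Definition~\ref{Def:closed-contr}. Your version is slightly more explicit about the state space, the dimension check, and the (unneeded) converse, but the core argument is the same one-line substitution.
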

\begin{proof}
    First we consider control design for the composite system  $\dot{\bar{x}}(t) = \mathcal{A} \bar{x}(t) +  \mathcal{B}\bar{u}(t)$ directly. Based on the composite open-loop system $\dot{\bar{x}}(t) = \mathcal{A} \bar{x}(t) +  \mathcal{B}\bar{u}(t)$, we denote the feedback control law obtained for the composite system by $\bar{u}(t) = \hat{\mathcal{F}}\bar{x}(t)$. For two subsystems $\dot{x}^i (t) = {A}^i x^i(t) + {B}^iu^i(t)$ with $i \in \{1,2\}$, with the control laws given by $u^i(t) = F^ix^i(t)$, the composite control system is given by \eqref{eq:comp_closed}. Hence, 
    composing the control laws from the subsystems gives $\mathcal{F}\bar{x}(t)= \bar{F}K\bar{x}(t)$. Comparing the dynamics of the two systems, 
    if $\bar{F}K = \hat{\mathcal{F}}$, then the systems have identical dynamics. 
\end{proof}
Theorem~\ref{thm:comp} provides a way of testing the compositionality of a feedback control design, namely by comparing
the two feedback control gain matrices $\hat{\mathcal{F}}$ and $\mathcal{F}=\Bar{F}K$. The theorem lays a foundation for 
designing controllers for composed systems via resource sharing machines through designing controllers for their subsystems. 
Based on the results in this section, we will 
use resource sharing machines to 
study the compositionality of LQR designs specifically in the next section.

\section{Compositionality of LQR} 
In this section, we study the compositionality of LQR design problems through resource sharing machines. We will analyze 
differences between (i) the LQR controller designed for a composite system and the 
(ii) controller that results from composing LQR controllers
designed for the subsystems of the composite. 
Together, these analyses will solve Problem~\ref{prob:3}. 

\subsection{LQR Design for Subsystems} \label{ss:lqr_subsystems}
First we introduce the LQR problem for each subsystem. Again, we consider two linear subsystems $\dot{x}^i (t) = {A}^i x^i(t) + {B}^iu^i(t)$ for
$i \in \{1,2\}$, and we define cost functions
\begin{multline}
\label{eq:cost_fun_sub}
     J^i(x^i , u^i) = \int_{0}^{\infty} x^i(\tau)^{\top} Q^i x^i(\tau)\,d\tau \\ 
    + \int_{0}^{\infty} u^i(\tau)^{\top} R^i u^i(\tau)\,d\tau,  
\end{multline}
where~$Q^i\in\mathbb{R}^{n_i\times n_i}$ and $R^i\in\mathbb{R}^{m_i\times m_i}$ for
$i\in \{1,2\}$. 
Additionally, the matrix $Q^i$ is positive-semidefinite while $R^i$ is positive definite for $i\in \{1,2\}$. 
The first term on the right-hand side is the penalty on excessive state size, and the second term on the right-hand side 
is the penalty on control effort. 
The optimal solutions are given by the following state-feedback representations \cite{anderson2007optimal}:
\begin{equation}
\label{Eq: LQR_sub}
    u^i(t) = -(R^i)^{-1}(B^i)^{\top}P^ix^i(t)= -F^ix^i(t),
\end{equation}
for $i\in\{1,2\}$, where~$P^i$ for $i\in\{1,2\}$ satisfies an algebraic Riccati equation given by 
\begin{equation}
     \boldsymbol{0} = -P^i A^i -(A^i)^{\top}P^i-Q^i+ P^iB^i (R^i)^{-1}(B^i)^{\top}P^i, 
     \label{Eq:ARE_sub}
\end{equation}
where the solution $P^i$ is positive semidefinite. 

\subsection{LQR Design for the Composite System} \label{ss:lqr_composite}
We now define the LQR problem for the composite system in~\eqref{eq:comp_closed}. The cost function is defined as 
\begin{multline}
\label{eq:cost_fun}
     \mathcal{J}(\bar{x},\bar{u}) = 
     \int_{0}^{\infty} \bar{x}(\tau)^{\top} 
     \underbrace{K^{\top}
     \begin{bmatrix}
         Q^1 & \boldsymbol{0} \\  \boldsymbol{0} & Q^2
     \end{bmatrix}K}_{\mathcal{Q}}
     \bar{x}(\tau)\,d\tau  \\ 
    + \int_{0}^{\infty} \bar{u}(\tau)^{\top} 
    \underbrace{
     \begin{bmatrix}
         R^1 & \boldsymbol{0} \\  \boldsymbol{0} & R^2
     \end{bmatrix}}_{\bar{R}}
    \bar{u}(\tau)\,d\tau,  
\end{multline}
where $\mathcal{Q}$ and $\bar{R}$ are composite weight matrices. 
Note that following the proof of Theorem~\ref{Thm:Stability}, 
the matrix
$\mathcal{Q}$ is positive semidefinite. And $\bar{R}$ is 
positive definite. Further, the optimal solution of the control problem in \eqref{eq:comp_closed} with the objective function $\mathcal{J}$ is
\begin{equation}
\label{Eq: LQR_comp}
    \bar{u}(t) = -\bar{R}^{-1}\mathcal{B}^{\top}\mathcal{P}\Bar{x}(t)= -\mathcal{F}\bar{x}(t),
\end{equation}
where $\mathcal{P}$ is a symmetric positive semidefinite matrix that solves the algebraic Riccati equation 
\begin{equation}
\label{Eq:ARE_comp}
     \bold{0} = -\mathcal{P}\mathcal{A} -\mathcal{A}^{\top}\mathcal{P}-\mathcal{Q}+ \mathcal{P}\mathcal{B} \bar{R}^{-1}\mathcal{B}^{\top}\mathcal{P}. 
\end{equation}

\subsection{Comparison of LQR Approaches}
This subsection compares the LQR designs from Sections~\ref{ss:lqr_subsystems} and~\ref{ss:lqr_composite}. 
The following theorem gives a necessary and sufficient condition for the compositionality of the LQR controller,
i.e., conditions under which the composition of LQR controllers designed for subsystems is equivalent
to an LQR controller designed for the composite system. 

\begin{theorem} \label{thm:LQR_Comp}
The LQR solution of the composite system in \eqref{eq:comp_closed} and
the composition of the LQR solutions for the subsystems,
given in \eqref{Eq: LQR_sub}, are equivalent if and only if
\begin{equation} \label{eq:K_condition}
\underbrace{\begin{bmatrix}
    P_1 & \boldsymbol{0}\\\boldsymbol{0} & P_2 
\end{bmatrix}}_{\Bar{P}}K
= K\mathcal{P}.
\end{equation}
\end{theorem}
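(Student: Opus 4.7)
The plan is to use Theorem~\ref{thm:comp} to reduce the equivalence of the two LQR designs to the single algebraic condition $\mathcal{F} = \bar{F}K$, where $\mathcal{F} = \bar{R}^{-1}\mathcal{B}^{\top}\mathcal{P}$ is the composite LQR gain from \eqref{Eq: LQR_comp} and $\bar{F} = \bar{R}^{-1}\bar{B}^{\top}\bar{P}$ is the block-diagonal assembly of the subsystem gains from \eqref{Eq: LQR_sub}, with $\bar{P} = \diag(P^1,P^2)$. The identity $\mathcal{B}^{\top} = \bar{B}^{\top}K$ coming from Definition~\ref{Def:op-contr} will be the key algebraic bridge that converts the comparison of $\mathcal{F}$ and $\bar{F}K$ into a comparison of $K\mathcal{P}$ and $\bar{P}K$.

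For the ``if'' direction, I would assume $\bar{P}K = K\mathcal{P}$ and compute directly
\[
\mathcal{F} = \bar{R}^{-1}\mathcal{B}^{\top}\mathcal{P} = \bar{R}^{-1}\bar{B}^{\top}K\mathcal{P} = \bar{R}^{-1}\bar{B}^{\top}\bar{P}K = \bar{F}K,
\]
which triggers Theorem~\ref{thm:comp} and yields equivalence. As an internal consistency check, I would pre- and post-multiply the block form of the subsystem Riccati equation \eqref{Eq:ARE_sub} by $K^{\top}$ and $K$, then repeatedly use $\bar{P}K = K\mathcal{P}$ together with its transpose $K^{\top}\bar{P} = \mathcal{P}K^{\top}$ (valid because both Riccati solutions are symmetric) to recover the composite Riccati equation \eqref{Eq:ARE_comp}; uniqueness of the positive semidefinite stabilizing solution then certifies that the $\mathcal{P}$ constructed from $\bar{P}$ truly coincides with the one in \eqref{Eq:ARE_comp}.

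For the ``only if'' direction, equivalence combined with Theorem~\ref{thm:comp} gives $\mathcal{F} = \bar{F}K$, which after left-multiplication by $\bar{R}$ rearranges to
\[
\bar{B}^{\top}\bigl(K\mathcal{P} - \bar{P}K\bigr) = \boldsymbol{0}.
\]
This step is where I expect the real difficulty to lie: on its face, this identity only forces the difference $D := K\mathcal{P} - \bar{P}K$ to live in the left null space of $\bar{B}$, which is \emph{a priori} strictly weaker than the claimed equality $D = \boldsymbol{0}$. Closing this gap will, I anticipate, require propagating $\bar{B}^{\top}D = \boldsymbol{0}$ through both the composite Riccati equation \eqref{Eq:ARE_comp} and the block form of \eqref{Eq:ARE_sub}, exploiting symmetry of $\mathcal{P}$ and $\bar{P}$ and the uniqueness of their stabilizing solutions, and ultimately invoking a controllability or observability condition on an auxiliary linear time-invariant system of the type foreshadowed in the introduction. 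This linear-algebraic tightening is the main obstacle, and it is the piece I would spend the bulk of my effort on.
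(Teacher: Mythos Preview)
Your overall strategy coincides with the paper's: both compute the two closed-loop dynamics, use $\mathcal{B}=K^{\top}\bar B$ to factor, and reduce the comparison to the single matrix identity
\[
\bar R^{-1}\bar B^{\top}\bigl(\bar P K - K\mathcal{P}\bigr)=\boldsymbol{0}
\]
(the paper keeps an extra $\mathcal{B}$ on the left since it compares dynamics rather than gains, but the substance is identical). Your ``if'' direction is exactly the paper's argument.

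Where you diverge is in how much work you plan for the ``only if'' direction. You correctly observe that $\bar B^{\top}D=\boldsymbol{0}$ only forces $D:=\bar P K-K\mathcal{P}$ into the left null space of $\bar B$, and you propose to close this gap by propagating through the two Riccati equations and ultimately invoking a controllability/observability condition. The paper does \emph{not} do any of this here: its proof simply asserts that ``since $B^1$ and $B^2$ are non-zero matrices and $R^1$ and $R^2$ are positive definite,'' the displayed condition is equivalent to $\bar P K=K\mathcal{P}$. That inference is, as you suspected, not airtight in general (a nonzero $\bar B$ can certainly have a nontrivial left null space), so the gap you flagged is real and is present in the paper's own argument as well. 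The Riccati/controllability machinery you anticipate does appear in the paper, but only later (Theorem~\ref{Thm:Compositionality}), and even there it is packaged as a separate sufficient condition rather than as a repair of this step. So your plan is more careful than the published proof; just be aware that you are aiming to prove strictly more than what the paper actually establishes for this theorem.

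One small technical note: you invoke Theorem~\ref{thm:comp} in the converse direction (``equivalence $\Rightarrow \hat{\mathcal{F}}=\bar F K$''), but that theorem is stated only as a sufficient condition. If ``equivalent'' is taken to mean equal closed-loop dynamics, then equivalence yields only $\mathcal{B}\hat{\mathcal{F}}=\mathcal{B}\bar F K$, i.e., $K^{\top}\bar B(\hat{\mathcal{F}}-\bar F K)=\boldsymbol{0}$, which introduces a second null-space issue on top of the one you already identified. This does not change your overall plan, but you should state explicitly which notion of equivalence you are using before invoking Theorem~\ref{thm:comp} in reverse.
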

\begin{proof}
    We first introduce the controller that results from composing
    the LQR controllers designed for each subsystem. 
    By substituting \eqref{Eq: LQR_sub} into \eqref{Eq:Contr_comp} we have 
\begin{multline} \label{eq:lqr_subsystems}
 \dot{\bar{x}}(t)=\mathcal{A}\bar{x}(t) \\ +\mathcal{B}
  \begin{bmatrix}
-(R^1)^{-1}(B^1)^{\top} & \boldsymbol{0} 
\\ \boldsymbol{0} & -(R^2)^{-1}(B^2)^{\top}
 \end{bmatrix}
 \Bar{P}
 K\Bar{x}(t).
\end{multline}
Next, by implementing the LQR design for the composite system (given in \eqref{Eq: LQR_comp}) on the system in \eqref{eq:comp_closed}, we have
\begin{align} 
 \dot{\bar{x}}(t)&=\mathcal{A}\bar{x}(t)+\mathcal{B}
(-\Bar{R}^{-1}\mathcal{B}^{\top}\mathcal{P})\Bar{x}(t)\\
&=\mathcal{A}\bar{x}(t) \\
&\quad+\mathcal{B}
  \begin{bmatrix}
    -(R^1)^{-1}(B^1)^{\top} & \boldsymbol{0} \\ \boldsymbol{0} & -(R^2)^{-1}(B^2)^{\top}
 \end{bmatrix}
 K\mathcal{P}
\Bar{x}(t). \label{eq:lqr_composite}
\end{align}
Note that we use the fact that 
$\mathcal{B}=K^\top \bar{B}$. 
In order to compute the difference between the system 
behaviors under the two different LQR designs, we subtract the right-hand
side of~\eqref{eq:lqr_composite} from the right-hand side of~\eqref{eq:lqr_subsystems} to find
\begin{align}
\label{Eq:condition}
\mathcal{B}
  \begin{bmatrix}
    -(R^1)^{-1}(B^1)^{\top} & \boldsymbol{0} \\ \boldsymbol{0} & -(R^2)^{-1}(B^2)^{\top}
    \end{bmatrix}
    (
\bar{P}K-K\mathcal{P}) = \boldsymbol{0}.
\end{align}
Since $B^1$ and $B^2$ are non-zero matrices and $R^1$ and $R^2$ are positive definite matrices,
\eqref{Eq:condition} is satisfied if and only if $
\Bar{P}K=K\mathcal{P}$. Therefore, we have shown that the composition of the LQR solutions in \eqref{Eq: LQR_sub} for the control subsystems via the composition \eqref{eq:comp_closed} and the direct LQR design of the composite system in \eqref{eq:comp_closed} are equivalent if and only if $\Bar{P}K$ 
= $K\mathcal{P}$. 
\end{proof}
Theorem~\ref{thm:LQR_Comp} can be regarded as a special case of Theorem~\ref{thm:comp} in which the control law is based on LQR design. However, Theorem~\ref{thm:LQR_Comp}  further illustrates that, for a composite system, it is possible to design an LQR controller for the system using the LQR designs of the corresponding control subsystems, if we can construct a composition matrix $K$ to satisfy~\eqref{eq:K_condition}. Note that the matrices $P_1$ and $P_2$ that comprise $\Bar{P}$ and $\mathcal{P}$ are from the steady-state solutions of the algebraic Riccati equations given in~\eqref{Eq:ARE_sub} and~\eqref{Eq:ARE_comp}, respectively. Therefore, we further investigate the compositionality of LQR through exploring the properties of these algebraic Riccati equations.
Specifically, the next theorem unifies them into a single Generalized Algebraic Riccati equation.
\begin{theorem}
\label{Thm:G_ARE}
The matrix $K\mathcal{P}$ in the LQR design for the composite system  and the matrix $\Bar{P}K$ in the composition of LQR designs for the corresponding subsystems are
both solutions to the General Algebraic Riccati equation 
\begin{align}
\label{Eq: ARE_constru}
\boldsymbol{0} = - X^\top\Bar{A}K - K^\top\Bar{A}^\top X  -
 \mathcal{Q} + X^\top\Bar{B}\Bar{R}^{-1} \Bar{B}^\top X,
\end{align}
where the matrix~$X$ is to be solved for. 
\end{theorem}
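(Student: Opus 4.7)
The plan is to verify directly that both substitutions $X = K\mathcal{P}$ and $X = \bar{P}K$ reduce the generalized equation \eqref{Eq: ARE_constru} to an algebraic identity that is already known to hold, namely a Riccati equation from Section~\ref{ss:lqr_subsystems} or \ref{ss:lqr_composite}. The key algebraic ingredients I will use throughout are the identities $\mathcal{A} = K^\top \bar{A} K$ and $\mathcal{B} = K^\top \bar{B}$ (from Definition~\ref{Def:op-contr}), the block structure $\mathcal{Q} = K^\top \bar{Q} K$ with $\bar{Q} = \mathrm{diag}(Q^1,Q^2)$, the block structure of $\bar{R}$, and the symmetry of $\mathcal{P}$ and of each $P^i$ (hence of $\bar{P}$).

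For the substitution $X = K\mathcal{P}$, I would plug into \eqref{Eq: ARE_constru} term by term. Using $\mathcal{P}^\top = \mathcal{P}$, the first two terms become $\mathcal{P}K^\top \bar{A}K = \mathcal{P}\mathcal{A}$ and $K^\top \bar{A}^\top K \mathcal{P} = \mathcal{A}^\top \mathcal{P}$. The quadratic term becomes $\mathcal{P} K^\top \bar{B}\bar{R}^{-1}\bar{B}^\top K \mathcal{P} = \mathcal{P} \mathcal{B}\bar{R}^{-1}\mathcal{B}^\top \mathcal{P}$. The resulting expression is exactly the right-hand side of the composite algebraic Riccati equation \eqref{Eq:ARE_comp}, which is zero by definition of $\mathcal{P}$. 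This half is essentially a one-line check.

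For the substitution $X = \bar{P}K$, I would first stack the two subsystem Riccati equations \eqref{Eq:ARE_sub} into a single block-diagonal equation
\begin{equation*}
\boldsymbol{0} = -\bar{P}\bar{A} - \bar{A}^\top \bar{P} - \bar{Q} + \bar{P}\bar{B}\bar{R}^{-1}\bar{B}^\top \bar{P},
\end{equation*}
which is valid because $\bar{A}$, $\bar{B}$, $\bar{Q}$, $\bar{R}$ and $\bar{P}$ are all block-diagonal with the corresponding subsystem blocks. I would then pre-multiply by $K^\top$ and post-multiply by $K$; using symmetry of $\bar{P}$, each term matches one of the terms in \eqref{Eq: ARE_constru} evaluated at $X = \bar{P}K$, and the $K^\top \bar{Q} K$ term collapses to $\mathcal{Q}$ by definition.

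The hard part is really just bookkeeping: making sure the symmetry of $\bar{P}$ is invoked at the right place so that $X^\top \bar{A} K$ and $K^\top \bar{A}^\top X$ come out matching the sandwiched block-diagonal ARE, and that the definitions from \eqref{Eq:Contr_comp} and \eqref{eq:cost_fun} are consistent with pulling $K^\top$ and $K$ through the appropriate block-diagonal matrices. No new identities or deep facts are needed; the result essentially says that \eqref{Eq: ARE_constru} is the common form obtained either by sandwiching the subsystem AREs with $K^\top(\cdot)K$ or by rewriting the composite ARE with $\mathcal{P}$ replaced by $K\mathcal{P}$, and both routes are short direct computations.
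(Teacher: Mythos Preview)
Your proposal is correct and follows essentially the same approach as the paper: both verify $X=K\mathcal{P}$ by collapsing \eqref{Eq: ARE_constru} into the composite ARE \eqref{Eq:ARE_comp} via $\mathcal{A}=K^\top\bar{A}K$, $\mathcal{B}=K^\top\bar{B}$, and verify $X=\bar{P}K$ by stacking the subsystem AREs \eqref{Eq:ARE_sub} into a block-diagonal identity and sandwiching with $K^\top(\cdot)K$. Your write-up is, if anything, slightly more explicit about where symmetry of $\mathcal{P}$ and $\bar{P}$ is invoked than the paper's own proof.
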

\begin{proof}
We first combine the  algebraic Riccati equations of control subsystems in \eqref{Eq:ARE_sub} with the solution $\bar{P}$ in~\eqref{Eq:ARE_sub},  
given by
\begin{multline} \label{Eq:comp_P}
\boldsymbol{0} = -\Bar{P} \begin{bmatrix}
    A^1 & 0 \\ 0 & A^2
 \end{bmatrix} -\begin{bmatrix}
    A^1 & 0 \\ 0 & A^2
 \end{bmatrix}^{\top}
 \Bar{P}
 -\begin{bmatrix}
    Q^1 & 0 \\ 0 & Q^2
 \end{bmatrix} 
  \\ +\Bar{P}\begin{bmatrix}
    B^1 & 0 \\ 0 & B^2
 \end{bmatrix} \begin{bmatrix}
    R^1 & 0 \\ 0 & R^2
 \end{bmatrix}^{-1}\begin{bmatrix}
    B^1 & 0 \\ 0 & B^2
 \end{bmatrix}^{\top}\Bar{P}. 
\end{multline}
Next, 
we expand the algebraic Riccati equation of the LQR design for the composite system in~\eqref{Eq:ARE_comp}, given by
\begin{align}
\boldsymbol{0} &= -\mathcal{P}
     K^{\top}
    \begin{bmatrix}
        A^1 & 0 \\ 0 & A^2
    \end{bmatrix} K
     - K^{\top}\begin{bmatrix}
        A^1 & 0 \\ 0 & A^2
    \end{bmatrix}^{\top} K\mathcal{P} \\
    &-K^{\top}
    \begin{bmatrix}
        Q^1 & 0 \\ 0 & Q^2
    \end{bmatrix} K\\ \nonumber
    &+ \mathcal{P}K^{\top}
    \begin{bmatrix}
        B^1 & 0 \\ 0 & B^2
    \end{bmatrix} 
    \begin{bmatrix}
        R^1 & 0 \\ 0 & R^2
    \end{bmatrix} ^{-1}
    \begin{bmatrix}
        B^1 & 0 \\ 0 & B^2
    \end{bmatrix}^{\top} K
    \mathcal{P}.  
\end{align}
It is not straightforward to compare the solutions of these two algebraic Riccati equations since 
$\Bar{P}\in\mathbb{R}^{(n_1+n_2)\times (n_1+n_2)}$ and $\mathcal{P}\in\mathbb{R}^{(n_1+n_2-k)\times (n_1+n_2-k)}$ have different dimensions. Additionally, we want to compare $\Bar{P}K$ and $K\mathcal{P}$. Therefore, we will construct a Generalized Algebraic Riccati equation whose solutions will include both $\Bar{P} K$ and $K\mathcal{P}$.

We left multiply $K^\top$ to the right-hand side of~\eqref{Eq:comp_P} and we right multiply $K$ to the right-hand side of~\eqref{Eq:comp_P}. Then we find that both $\bar{P}K$ and $K\mathcal{P}$ are solutions to the Generalized Algebraic Riccati equation, given by
\begin{align}
\label{Eq: ARE_constru}
\boldsymbol{0} = -X^\top\Bar{A}K - K^\top\Bar{A}^\top X -
 \mathcal{Q} +  X^\top\Bar{B}\Bar{R}^{-1} \Bar{B}^\top X,
\end{align}
where $X$ is to be solved for. Based on~\eqref{Eq: ARE_constru}, the matrix $K\mathcal{P}$ from the LQR design of the composite system 
and the matrix $\Bar{P}K$ from the composed LQR designs of the corresponding subsystems are both solutions to the Generalized Algebraic Riccati equation. 
\end{proof}

Theorem~\ref{Thm:G_ARE} connects  the LQR design for the composite system and the composed LQR designs for the corresponding subsystems. 
Note that it is common for an algebraic Riccati equation to have more than one solution~\cite{kuvcera1973review}. Hence,
we are able to consider the case in which $\bar{P}K \neq K\mathcal{P}$ in the Generalized Algebraic Riccati equation in~\eqref{Eq: ARE_constru}. 
We next further leverage \eqref{Eq: ARE_constru} to determine conditions under which $\bar{P}K = K\mathcal{P}$ holds. 

\begin{corollary}
\label{Cor:sym}
    The equation $\bar{P}K = K\mathcal{P}$ holds only if $\bar{P}KK^\top$ is symmetric and positive semidefinite.
\end{corollary}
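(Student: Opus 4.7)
The plan is to exploit the defining identity $\bar{P}K = K\mathcal{P}$ directly by right-multiplying both sides by $K^\top$. This yields $\bar{P}KK^\top = K\mathcal{P}K^\top$, which reduces the corollary to showing that the right-hand side $K\mathcal{P}K^\top$ is both symmetric and positive semidefinite.

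For symmetry, I would invoke the fact from Section~\ref{ss:lqr_composite} that $\mathcal{P}$, as the stabilizing solution of the algebraic Riccati equation \eqref{Eq:ARE_comp}, is symmetric. Then $(K\mathcal{P}K^\top)^\top = K\mathcal{P}^\top K^\top = K\mathcal{P}K^\top$, so symmetry is inherited directly. For positive semidefiniteness, I would use the standard congruence argument: for any vector $v \in \mathbb{R}^{n_1+n_2}$, let $w = K^\top v$; then
\begin{equation}
v^\top (K\mathcal{P}K^\top) v = w^\top \mathcal{P} w \geq 0,
\end{equation}
since $\mathcal{P}$ is positive semidefinite (again from LQR theory). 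Substituting $K\mathcal{P}K^\top = \bar{P}KK^\top$ from the hypothesis completes the argument.

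There is no real obstacle here beyond recognizing which properties of $\mathcal{P}$ need to be imported. The proof is essentially a one-line congruence transformation once the hypothesis is rewritten. The main thing to be careful about is to phrase the result as a necessary condition (``only if'') rather than sufficient, since the corollary does not claim that symmetry and positive semidefiniteness of $\bar{P}KK^\top$ imply equality; it only rules out the equality when this property fails. I would emphasize in the proof that this gives a convenient, easy-to-check necessary screening criterion for compositionality, in line with the paper's stated goal of reducing compositionality testing to tractable algebraic checks.
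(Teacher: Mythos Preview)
Your proposal is correct and follows essentially the same approach as the paper: right-multiply the hypothesis by $K^\top$ to obtain $\bar{P}KK^\top = K\mathcal{P}K^\top$, then use symmetry and positive semidefiniteness of $\mathcal{P}$ to conclude. The only cosmetic difference is that the paper appeals to the proof of Theorem~\ref{Thm:Stability} for the congruence step establishing positive semidefiniteness, whereas you write that step out explicitly.
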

\begin{proof}
    If the two solutions to \eqref{Eq: ARE_constru} are equivalent, i.e., $\bar{P}K = K\mathcal{P}$, we must have  $\bar{P}KK^\top = K\mathcal{P}K^\top$. Since $\mathcal{P}$ is a  symmetric, positive semidefinite matrix due to being the solution to the algebraic Riccati equation in~\eqref{Eq:ARE_comp},
     we have that $K\mathcal{P}K^\top$ is also symmetric. Based on the proof of Theorem~\ref{Thm:Stability}, if $\mathcal{P}$ is a symmetric and positive semidefinite matrix, then $K\mathcal{P}K^\top$ must be a symmetric and positive semidefinite matrix. Hence, to ensure that $\bar{P}KK^\top = K\mathcal{P}K^\top$, the matrix $\bar{P}KK^\top$ must be symmetric and positive semidefinite. 
\end{proof}
Corollary~\ref{Cor:sym} provides a necessary condition to determine whether a composed LQR design from control subsystems is the same as the LQR design of the composite system. In particular, based on Corollary~\ref{Cor:sym}, it is unnecessary for us to compute the LQR controller of the composite system to check compositionality. Instead, we can explore the compositionality of the LQR design through studying the algebraic Riccati equations of the LQR designs of the  subsystems in \eqref{Eq:ARE_sub} and the composition matrix $K$. 

Following this idea, we present the following theorem, where we can leverage information from the control subsystems to determine the compositionality of LQR, without
needing to actually generate any LQR designs or solve any Riccati equations. 
\begin{theorem}
\label{Thm:Compositionality}
Let $\bar{P}KK^\top$ be a positive semidefinite, symmetric matrix. If the pair $(\Bar{A}KK^\top, \Bar{B}\Sigma\Lambda^{-\frac{1}{2}})$ is a controllable pair, and the pair $(\Bar{A}KK^\top, \Delta$) is an observable pair, where $\Sigma$ and $\Lambda$ are obtained through 
the eigendecomposition of the symmetric positive definite matrix $\bar{R}=\Sigma\Lambda\Sigma^\top$ in~\eqref{eq:cost_fun}, 
and $K\mathcal{Q}K^\top = \Delta^\top \Delta$,  
then the diagram in Definition~\ref{Def:compositionality} commutes.
That is, the following two objects are equivalent: 
(i) the composition of the LQR solutions \eqref{Eq: LQR_sub} of the control subsystems via \eqref{eq:comp_closed} and (ii) the LQR solution of the composite system in \eqref{eq:comp_closed}.
\end{theorem}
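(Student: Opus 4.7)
The plan is to reduce the claim to Theorem~\ref{thm:LQR_Comp}, which says that the two LQR designs coincide if and only if $\bar{P}K = K\mathcal{P}$. By Theorem~\ref{Thm:G_ARE}, both $\bar{P}K$ and $K\mathcal{P}$ already satisfy the Generalized Algebraic Riccati equation, so my strategy is to pre-multiply that equation by $K$ and post-multiply by $K^\top$ in order to convert it into a standard algebraic Riccati equation whose unique symmetric positive semidefinite solution must be simultaneously $\bar{P}KK^\top$ and $K\mathcal{P}K^\top$. I will then cancel the trailing $K^\top$ using the structure of $K$ to recover $\bar{P}K = K\mathcal{P}$.

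Carrying out the multiplication, the substitutions $X = \bar{P}K$ and $X = K\mathcal{P}$ yield $KX^\top = KK^\top\bar{P}$, $XK^\top = \bar{P}KK^\top$ in the first case, and $KX^\top = XK^\top = K\mathcal{P}K^\top$ in the second, where I use symmetry of $\bar{P}$ and $\mathcal{P}$. The hypothesis that $\bar{P}KK^\top$ is symmetric forces $KK^\top\bar{P} = \bar{P}KK^\top$, so both candidates can be regarded as a single symmetric matrix $V$. Setting $\mathbf{A} := \bar{A}KK^\top$ and $\mathbf{Q} := K\mathcal{Q}K^\top$, and using the eigendecomposition $\bar{R} = \Sigma\Lambda\Sigma^\top$ to rewrite $\bar{B}\bar{R}^{-1}\bar{B}^\top = (\bar{B}\Sigma\Lambda^{-1/2})(\bar{B}\Sigma\Lambda^{-1/2})^\top$, both $V = \bar{P}KK^\top$ and $V = K\mathcal{P}K^\top$ then satisfy the standard ARE
\begin{equation*}
\boldsymbol{0} = -V\mathbf{A} - \mathbf{A}^\top V - \mathbf{Q} + V(\bar{B}\Sigma\Lambda^{-1/2})(\bar{B}\Sigma\Lambda^{-1/2})^\top V,
\end{equation*}
and both are symmetric positive semidefinite: the first by hypothesis, and the second because $\mathcal{P}$ is, via the same congruence argument used in the proof of Theorem~\ref{Thm:Stability}.

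Under the stated controllability of $(\mathbf{A}, \bar{B}\Sigma\Lambda^{-1/2})$ and observability of $(\mathbf{A}, \Delta)$ with $\mathbf{Q} = \Delta^\top\Delta$, the classical existence-and-uniqueness theorem for AREs guarantees a unique symmetric positive semidefinite solution, whence $\bar{P}KK^\top = K\mathcal{P}K^\top$. To remove the trailing $K^\top$, I will exploit that by construction each row of $K$ contains exactly one $1$, so that $K^\top K$ is diagonal with strictly positive entries and is therefore invertible; the matrix $K(K^\top K)^{-1}$ is then a right inverse of $K^\top$, and post-multiplying the identity above by it yields $\bar{P}K = K\mathcal{P}$. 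Invoking Theorem~\ref{thm:LQR_Comp} then delivers the required commutativity of the diagram.

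The hardest step is the conversion from the GARE to a genuine single-variable symmetric ARE: the pre- and post-multiplied equation a priori couples $V$ with $V^\top$ asymmetrically, and only the explicit symmetry hypothesis on $\bar{P}KK^\top$, together with the automatic symmetry of $K\mathcal{P}K^\top$, lets me identify it with the standard ARE to which the uniqueness theorem applies. A secondary subtlety is justifying that $K^\top$ admits a right inverse; this is not a property of arbitrary tall matrices, but it holds here because of the specific one-nonzero-per-row structure of $K$ dictated by the resource sharing machine.
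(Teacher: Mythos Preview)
Your proposal follows essentially the same route as the paper's proof: pre- and post-multiply the Generalized ARE of Theorem~\ref{Thm:G_ARE} by $K$ and $K^\top$ to obtain a standard ARE in the variable $\mathcal{X}=XK^\top$, invoke uniqueness of the symmetric positive semidefinite solution under the stated controllability/observability hypotheses to conclude $\bar{P}KK^\top = K\mathcal{P}K^\top$, and then cancel the trailing $K^\top$ to recover $\bar{P}K = K\mathcal{P}$ and apply Theorem~\ref{thm:LQR_Comp}. Your justification of the cancellation step via the invertibility of $K^\top K$ (hence a right inverse of $K^\top$) is in fact more careful than the paper's, which only appeals to $K$ being nonzero.
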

\begin{proof}
In order to compare the solutions, we will explore the solutions of the Generalized Algebraic Riccati equation~\eqref{Eq: ARE_constru} 
proposed in Theorem~\ref{Thm:G_ARE}. Unlike standard algebraic Riccati equations, where the  matrices are always square, the Generalized Algebraic Riccati equation constructed in~\eqref{Eq: ARE_constru} has a non-square generalized system matrix $\Bar{A}K$ and the solution $X$ is also a nonsquare matrix. Hence, existing techniques to characterize the standard Algebraic Riccati equation cannot be applied directly to study~\eqref{Eq: ARE_constru}. Hence, instead of studying solutions to the Generalized Algebraic Riccati equation in~\eqref{Eq: ARE_constru}, we will study the following constructed Algebraic Riccati equation:
\begin{align}
\label{eq:ARE_Final}
\boldsymbol{0} &= -\underbrace{KX^\top}_{\mathcal{X}}\Bar{A}KK^\top - KK^\top\Bar{A}^\top \underbrace{X K^\top}_{\mathcal{X}} -
 K\mathcal{Q}K^\top\\ \nonumber
 &+ \underbrace{KX^\top}_{\mathcal{X}} \Bar{B}\Bar{R}^{-1} \Bar{B}^\top \underbrace{X K^\top}_{\mathcal{X}},
\end{align}
where $\mathcal{X}=\Bar{P}KK^\top$ or $\mathcal{X}=K\mathcal{P}K^\top$ is a solution to the equation. 
Note that~\eqref{eq:ARE_Final} is obtained by left multiplying and right multiplying the Generalized Algebraic Riccati equation in~\eqref{Eq: ARE_constru} 
by $K$ and $K^{\top}$, respectively.

Now we explore conditions under which the two solutions $\Bar{P}KK^\top$ and $K\mathcal{P}K^\top$  will be the same. We consider \eqref{eq:ARE_Final} as an Algebraic Riccati equation of a linear system with system matrix $\Bar{A}KK^\top$ and input matrix $\Bar{B}\Bar{R}^{-1} \Bar{B}^\top$. 
Further, we consider a standard LQR formulation for 
\eqref{eq:ARE_Final} 
as in \eqref{eq:cost_fun_sub}, where 
we have the weight matrix for the inputs as $\Bar{R}$ and the weight matrix for the states as $KQK^\top$.
By comparing~\eqref{Eq:ARE_comp} and~\eqref{eq:ARE_Final}, if we substitute 
$\mathcal{A}$ with $\bar{A}KK^\top$, $\mathcal{P}$ with $\mathcal{X}$, $\mathcal{B}$ with $\bar{B}$, and $\mathcal{Q}$ with $KQK^\top$,
the solution to the Algebraic Riccati equation in \eqref{eq:ARE_Final} is given by $\mathcal{X} = XK^\top$.
Note that based on Theorem~\ref{Thm:G_ARE}, $X$ can be $\Bar{P}K$ or $\mathcal{P}K^\top$.
Hence, $\mathcal{X} = \Bar{P}KK^\top$ and $\mathcal{X} = K\mathcal{P}K^\top$ can both be solutions to~\eqref{eq:ARE_Final}. In addition, based on Corollary~\ref{Cor:sym}, 
$K\mathcal{P}K^\top$ is a positive semidefinite matrix, and
the necessary condition for $\mathcal{X} = \Bar{P}KK^\top= K\mathcal{P}K^\top$ is that $\Bar{P}KK^\top$ is a positive semidefinite matrix. Thus, the prerequisite to study whether $\mathcal{X} = \Bar{P}KK^\top= K\mathcal{P}K^\top$
is that $\Bar{P}KK^\top$ and 
$K\mathcal{P}K^\top$ are both symmetric positive semidefinite solutions to \eqref{eq:ARE_Final}.

To guarantee the existence of the solution $\mathcal{X}$ and that it is the unique symmetric positive semidefinite
solution of \eqref{eq:ARE_Final}, we need to leverage the following results for continuous algebraic Riccati equations. 
A continuous algebraic Riccati equation is
\begin{align}
    YC + C^\top Y -YDD^\top Y = -G,
\end{align}
where $C\in\mathbb{R}^{n\times n}$, $D\in\mathbb{R}^{n\times m}$, and $G\in\mathbb{R}^{n\times n}$ is a symmetric positive
semi-definite non-zero matrix. The matrix $Y\in \mathbb{R}^{m\times m} $ is what is to be solved for. 
Further, if $(C,D)$ is a controllable pair and $(C,Q)$ is an observable pair (where $G=Q^\top Q$), then $Y$ is the unique symmetric positive semi-definite
solution \cite{davies2008new}. 
Further, based on the proof of Theorem~\ref{Thm:Stability}, we can show that $K\mathcal{P}K^\top$  is a positive semidefinite matrix. 
Note that we use the fact that $\bar{R}=\Sigma\Lambda\Sigma^\top$ is the eigendecomposition of the symmetric positive definite matrix $\bar{R}$, where $\Sigma\in\mathbb{R}^{(m_1+m_2)\times(m_1+m_2)}$ is the eigenvector matrix corresponding to the diagonal eigenvalue matrix $\Lambda \in\mathbb{R}^{(m_1+m_2)\times(m_1+m_2)}$.
Therefore, under the condition 
that $KK^\top\Bar{P}$ and $K\mathcal{P}K^\top$ are both symmetric positive semidefinite
solutions to \eqref{eq:ARE_Final}, if the pair $(\Bar{A}KK^\top, \Bar{B}\Sigma\Lambda^{-\frac{1}{2}})$ is a controllable pair, and the pair $(\Bar{A}KK^\top, \Delta)$ is an observable pair, then \eqref{eq:ARE_Final} has a unique symmetric positive semidefinite solution. By uniqueness, we must have $\Bar{P}KK^\top = K\mathcal{P}K^\top$. 

Further, based on the fact that $K$ is not a zero matrix, we have  $\Bar{P}K =K\mathcal{P}$. Through Theorem~\ref{thm:LQR_Comp} and Theorem~\ref{Thm:G_ARE}, we can conclude that the composition of the LQR solutions \eqref{Eq:Contr_comp} of the subsystems via \eqref{Eq:Contr_comp} and the LQR solution of the composite system in \eqref{Eq: LQR_comp} are equivalent.  The composition of the LQR solutions \eqref{Eq: LQR_sub} of the control subsystems via \eqref{eq:comp_closed} and the LQR solution of the composite system in \eqref{eq:comp_closed} are equivalent. 
\end{proof} 
\begin{remark}
Theorem~\ref{Thm:Compositionality} transforms the study of compositionality of LQR design to the investigation of the controllability of the constructed pair $(\Bar{A}KK^\top, \Bar{B}\Sigma\Lambda^{-\frac{1}{2}})$ and the investigation of the observability of the pair $(\Bar{A}KK^\top, \Delta)$. Note that the controllability and observability analyses of the constructed pairs are not directly performed on the composed linear systems in \eqref{eq:comp_closed}. The advantage of leveraging Theorem~\ref{Thm:Compositionality} is that, instead of comparing control designs that involve solving potentially many algebraic Riccati equations, we can directly leverage information from the system and input matrices of the subsystems, the optimal control problem formulation of the subsystems, and the composition structure, to verify the compositionality of LQR design. 
\end{remark}
\section{Conclusion}
In this work, we introduce the composition of linear systems using the formulation of resource sharing machines. We explore the compositionality of open-loop and closed-loop control systems, specifically focusing on the LQR design problem on linear systems. 
In future work, we aim to investigate how other system properties, such as stability and controllability, are affected by the composition of resource sharing machines. We are motivated to study the optimality gap of the LQR designs in the cases in which compositionality fails. 
Furthermore, we are interested in implementing resource-sharing machines in engineering applications.
\normalem
\bibliographystyle{IEEEtran}
\bibliography{IEEEabrv,main}


\end{document}